\newcommand{\version}{June 7, 2014} 
\newtheorem{thm}{THEOREM}[section] \newtheorem{lm}[thm]{LEMMA}
 \theoremstyle{definition}
 \newcommand{\dd}{{\, \rm d}}
\newcommand{\tr}{{\rm Tr}} 
\renewcommand{\|}{{\Vert}} 
 \numberwithin{equation}{section}
\def\dgtw{d_{{\rm GTW}}}
\newcommand{\R}{{\mathord{\mathbb R}}}
\newcommand{\N}{{\mathord{\mathbb N}}}
\begin{document}

\def\tr{{\rm Tr}}

\title{Exponential approach to, and properties of, a
    non-equilibrium steady state in a dilute gas}

\author{\vspace{5pt} Eric A. Carlen$^1$, Joel L. Lebowitz$^{1,2}$ and
  Cl\'ement Mouhot$^{3}$ \\
  \vspace{5pt}\small{$1.$ Department of Mathematics, $2.$ Department
    of Physics,}\\
  [-6pt]\small{Rutgers University, 110 Frelinghuysen Road, Piscataway NJ 08854-8019 USA}\\
  \vspace{5pt}\small{$3.$  DPMMS, University of Cambridge, Wilberforce Road, Cambridge CB3 0WA, UK}\\
}

\date{\version}

\maketitle

\medskip

\centerline{\it Dedicated to Errico Presutti, friend and mentor.}

\let\thefootnote\relax\footnote{ \copyright \, 2014 by the
authors. This paper may be reproduced, in its entirety, for
non-commercial purposes.}

\begin{abstract} We investigate a kinetic model of a system in contact
  with several thermal reservoirs at different temperatures
  $T_\alpha$.  Our system is a spatially uniform dilute gas whose
  internal dynamics is described by the nonlinear Boltzmann equation
  with Maxwellian collisions.  Similarly, the interaction with
  reservoir $\alpha$ is represented by a Markovian process that has
  the Maxwellian $M_{T_\alpha}$ as its stationary state. We prove
  existence and uniqueness of a non-equilibrium steady state (NESS)
  and show exponential convergence to this NESS in a metric on
  probability measures introduced into the study of Maxwellian
  collisions by Gabetta, Toscani and Wenberg (GTW). This
  shows that the GTW distance between the current velocity
  distribution to the steady-state velocity distribution is a Lyapunov
  functional for the system.  We also derive expressions for the
  entropy production in the system plus the reservoirs which is always
  positive.
\end{abstract}

\section{Introduction} \label{intro}

The existence, uniqueness and nature of a non-equilibrium steady state
(NESS) of a system in contact with several reservoirs at different
temperatures and/or chemical potentials continues to be one of the
central problems in statistical mechanics, as is the approach to such
a state.  There are only a few models in which the isolated system
evolves according to classical Hamiltonian mechanics or according to
quantum mechanics for which we have even partial answers to these
questions \cite{BLR,D,LLP}.  In addition to the rather unphysical
models corresponding to harmonic crystals and ideal gases, existence
and uniqueness was proven for systems interacting with soft potentials
in contact with thermal walls \cite{KGI}. The resulting NESS is
spatially non-uniform, and we have little information about its
structure. This is true even for cases in which the system is
described mesoscopically by a one particle distribution $f(x,v,t)$, as
in kinetic theory, where correlatons between particles are neglible.

More is possible to prove for NESS of kinetic systems that are
spatially uniform. Such a system, with one reservoir, but acted upon
by an electric field, is investigated in \cite{CELMR1,CELMR2} and will
be discussed later in this paper.

Here we extend this investigation to the case in which the system is
coupled to several thermal reservoirs at different
temperatures. Remarkably, we find, for the first time we believe, a
Lyapunov functional for such systems. This is described in Sections 2
and 3. Then in Section 4 we consider entropy production for such
systems, and in Section 5 we consider the presence of an external
electric field. Finally, in Section 6 we consider the possibility of
deriving such kinetic models from more microscopic descriptions.

\subsection{Description of the basic model}

As already indicated, we deal in this note with a system 
described by the one-particle probability density $f(v,t)$.  We are
interested in particular in the case in which the evolution is given
by the non-linear Boltzmann equation with pseudo-Maxellian molecular
collisions, for which the collision kernel is
$$
Q(f,f) := \int_{\R^3}\int_{\mathbb S^2}
b\left(\frac{v-v_*}{|v-v_*|}\cdot \sigma\right) \big[ f(v_*')f(v') -
f(v_*)f(v) \big]\dd v_* \dd \sigma\ . 
$$ 
Here $\dd \sigma$ denotes the
uniform probability measure on the sphere, and
$$
v' := \frac{v+ v_* + |v-v_*|\sigma}{2}\qquad{\rm and}\qquad v_*' :=
\frac{v+ v_* - |v-v_*|\sigma}{2}\ .
$$
We assume \emph{Grad's angular cut-off} with $\int_{\mathbb S^2}
b(u\cdot \sigma)\dd \sigma =1$ for any unit vector $u$, so that
\begin{equation}\label{norm} \frac12 \int_{-1}^1 b(s)\dd s =1\ .
\end{equation}

Then we can separate $Q(f,f)$ into its gain and loss terms 
$$
Q(f,f) = Q^+(f,f) - Q^-(f,f)
$$ 
where
\begin{equation}\label{res0} 
Q^+(f,g)(v) := \int_{\R^3}\int_{\mathbb S^2}
b\left(\frac{v-v_*}{|v-v_*|}\cdot \sigma\right) g(v_*')f(v')\dd v_*
\dd \sigma \ , \quad Q^- (f,g) := f \ .
\end{equation}

Observe that the equilibria cancelling this collision operator are
given by the so-called Maxwellian density 
\begin{equation*}
  M_{u,T} (v) := \frac{1}{(2 T \pi)^{3/2}} \exp \left( -
    \frac{|v-u|^2}{2T} \right)
\end{equation*}
with bulk velocity $u \in \R^3$, and temperature $T >0$. We denote
simply $M_T = M_{0,T}$ when $u=0$. 

The system is spatially homogeneous, and is coupled to several thermal
reservoirs, indexed by $\alpha$, at temperatures $T_\alpha$.  To
describe the interaction with the reservoirs, we include in the
evolution equation a term of the form $Q(f,R)$. The simplest example
would have two reservoirs at temperatures $T_1$ and $T_2$ with the
same coupling, in which case
\begin{equation}\label{res2} 
R := \frac12 M_{T_1} + \frac12 M_{T_2} \ . 
\end{equation} 

However, our methods and results do not depend very much on this
particular form of $R$, and for this reason we leave the distribution
$R$ unspecified in much of our discussion.

It will be convenient to choose the time scale so that the total loss
term coming from both $Q(f,f)$ and $Q(f,R)$ is simply $-f$. We can do
this whatever the relative strength of the two collision mechanisms by
making an appropriate choice of the time scale so that in terms of a
parameter $\gamma\in (0,1)$, the evolution equation can be written as
\begin{equation}\label{res} 
\frac{\partial f}{\partial t} = (1-\gamma)
Q(f,f) + \gamma Q(f,R)\
\end{equation} where $R$ be any given probability density on $\R^3$.
Of course, if $R = M_T$ for some $T$, then $M_T$ is the unique steady
state solution of \eqref{res}.  However, if $R$ is given by
\eqref{res2} for $T_1\neq T_2$, then we have no simple expression for
any steady state.

Without loss of generality, we can scale the energy such that
\begin{equation}\label{normal} \int_{\R^3} v R(v)\dd v= 0 \qquad{\rm
and}\qquad \int_{\R^3} |v|^2 R(v)\dd v= 1\ .
\end{equation}

\begin{lm} Let $f_\infty$ be any steady state probability density of
  \eqref{res}. Then, assuming \eqref{normal}, we have
$$\int_{\R^3} v f_\infty (v)\dd v= 0 \qquad{\rm and} 
\qquad    \int_{\R^3} |v|^2 f_\infty (v)\dd v= 1\ .$$
\end{lm}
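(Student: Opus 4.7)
The plan is to test the stationary equation $(1-\gamma)Q(f_\infty,f_\infty) + \gamma Q(f_\infty,R) = 0$ against the collision invariants $v$ and $|v|^2$, using the classical weak formulation for Maxwellian molecules,
$$\int_{\R^3} \varphi\, Q(f,g)\, \dd v = \int_{\R^3}\!\int_{\R^3}\!\int_{\mathbb S^2} b(\hat u \cdot \sigma)\bigl[\varphi(v') - \varphi(v)\bigr] f(v) g(v_*)\, \dd v\, \dd v_*\, \dd \sigma,$$
which follows from the standard involutory change of variables $(v, v_*) \to (v', v_*')$ at fixed $\sigma$ (unit Jacobian, $b$ preserved). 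The idea is that the $Q(f_\infty, f_\infty)$ contribution will vanish by momentum/energy conservation in self-collisions, whereas the $Q(f_\infty, R)$ contribution, after carrying out the angular integral, becomes a nonzero multiple of $\int v f_\infty\, \dd v$ (resp.\ $1 - \int |v|^2 f_\infty\, \dd v$), forcing the claimed identities.

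For the self-interaction term, I would invoke the symmetry $(v, v_*, \sigma) \to (v_*, v, -\sigma)$ which preserves $b(\hat u \cdot \sigma)\, \dd v\, \dd v_*\, \dd \sigma$ and swaps $v' \leftrightarrow v_*'$; this replaces $\varphi' - \varphi$ by $\tfrac12[(\varphi'-\varphi) + (\varphi_*' - \varphi_*)]$, identically zero for any collision invariant. For the cross term with $R$, I would compute the angular integrals directly from $v' - v = \tfrac12(v_* - v) + \tfrac{|v-v_*|}{2}\sigma$ and $|v'|^2 - |v|^2 = \tfrac12(|v_*|^2 - |v|^2) + \tfrac{|v-v_*|}{2}(v+v_*)\cdot\sigma$, together with the rotation-invariant identity $\int_{\mathbb S^2} \sigma\, b(\hat u \cdot \sigma)\, \dd \sigma = \mu_1 \hat u$, where $\mu_1 := \int_{\mathbb S^2}(\hat u \cdot \sigma) b(\hat u \cdot \sigma)\, \dd \sigma$ is independent of the unit vector $\hat u$. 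This yields $\int_{\mathbb S^2} b(v'-v)\, \dd \sigma = -\tfrac{1-\mu_1}{2}(v-v_*)$ and $\int_{\mathbb S^2} b(|v'|^2 - |v|^2)\, \dd \sigma = \tfrac{1-\mu_1}{2}(|v_*|^2 - |v|^2)$, so plugging back and using \eqref{normal},
$$\int v\, Q(f_\infty, R)\, \dd v = -\tfrac{1-\mu_1}{2}\int v f_\infty\, \dd v, \qquad \int |v|^2 Q(f_\infty, R)\, \dd v = \tfrac{1-\mu_1}{2}\Bigl(1 - \int |v|^2 f_\infty\, \dd v\Bigr).$$

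Dividing the resulting stationary identities by the nonzero factor $\gamma(1-\mu_1)/2$ delivers the claim. The sole subtlety --- and the nearest thing to an \emph{obstacle} --- is the strict inequality $\mu_1 < 1$: a nonnegative $b$ of unit mass on $\mathbb S^2$ can attain $\int (\hat u \cdot \sigma) b\, \dd \sigma = 1$ only as a Dirac mass at $\sigma = \hat u$, which is precluded by Grad's cutoff \eqref{norm}.
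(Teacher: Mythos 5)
Your proposal is correct and follows essentially the same route as the paper: test the equation against the collision invariants $v$ and $|v|^2$ in the weak form, note that the $Q(f,f)$ contribution vanishes by conservation, and evaluate the angular integral for the $Q(f,R)$ cross term via $\int_{\mathbb S^2}\sigma\,b(\hat u\cdot\sigma)\,\dd\sigma=\mu_1\hat u$ with $\mu_1=\tfrac12\int_{-1}^1 sb(s)\,\dd s<1$. The only (cosmetic) difference is that the paper works dynamically --- showing $\tfrac{\dd}{\dd t}\int v f\,\dd v$ and $\tfrac{\dd}{\dd t}\int |v|^2 f\,\dd v$ relax exponentially, then reading off the stationary moments --- whereas you set $\partial_t f_\infty=0$ directly; both hinge on the same moment computation, and your explicit symmetrization argument for the vanishing of the self-collision contribution and your remark that $\mu_1<1$ is guaranteed by the Grad cutoff are both sound and match the paper's (implicit and explicit, respectively) use of the same facts.
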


\begin{proof} Let $f$ be a solution of \eqref{res}. For any test
function $\varphi(v)$, and any two probability densities $f$ and $g$,
we have, by a standard computation
$$
\int_{\R^3} Q(f,g)\varphi (v) \dd v = 
\int_{\R^3\times \R^3 \times \mathbb S^2} 
b(\sigma\cdot k) f(v)g(v_*) \big[ \varphi(v') - \varphi(v) \big]\dd v
\dd v_* \dd \sigma
$$
where $k := |v-v_*|^{-1} (v-v_*)$.

For $\varphi(v) = v$, $\varphi(v') - \varphi(v) = \frac12(v_*-v +
|v_*-v|\sigma)$.  Decomposing $\sigma = (\sigma\cdot k)k +
\sigma^\perp$, we have that
\begin{multline*}
\int_{\R^3\times \R^3 \times \mathbb S^2} b(\sigma\cdot k)
f(v)f(v_*) \big[\varphi(v') - \varphi(v)\big]\dd v \dd v_* \dd \sigma = \\
\left[1 -\frac12 \int_{-1}^1s b(s) \dd s\right] \int_{\R^3\times
\R^3}\left( \frac{v_*-v}{2} \right) f(v)g(v_*)\dd v \dd v_* \ .
\end{multline*} 
Therefore,
\begin{eqnarray*} 
  \frac{{\rm d}}{{\rm d}t}\left( \int_{\R^3} f(t,v)v\dd
    v \right) &=& \gamma \int_{\R^3} Q( f,R)v\dd v\nonumber \\ &=& -
  \frac\gamma2\left[1 -\frac12 \int_{-1}^1s b(s) \dd s\right]\int_{\R^3}
  f(t,v)v\dd v \ . 
\end{eqnarray*} 
By \eqref{norm}, $\left[1 -\frac12 \int_{-1}^1s
b(s) \dd s\right] > 0$, and so the first moment relaxes to zero
exponentially fast.

Likewise, for $\varphi(v) = |v|^2$,
$$\varphi(v') - \varphi(v) = \frac{|v_*|^2 - |v|^2}{2} - \sigma\cdot (v+v_*)\ .$$
Decomposing $\sigma$ as before,
\begin{multline*}
\int_{\R^3\times \R^3 \times \mathbb S^2} b(\sigma\cdot k)
f(v)f(v_*) \big[\varphi(v') - \varphi(v)\big]\dd v \dd v_* \dd \sigma = \\
\left[1 -\frac12 \int_{-1}^1s b(s) \dd s\right] \int_{\R^3\times
\R^3}\left( \frac{|v_*|^2-|v|^2}{2}\right) f(v)g(v_*)\dd v \dd v_* \end{multline*}
Therefore,
\begin{eqnarray} 
\frac{{\rm d}}{{\rm d}t}\left( \int_{\R^3}
f(t,v)|v|^2\dd v \right) &=& \gamma \int_{\R^3} Q( f,R)v\dd v\nonumber \\ &=&
- \frac\gamma2\left[1 -\frac12 \int_{-1}^1s b(s)\dd s\right]\left(
\int_{\R^3} f(t,v)|v|^2 \dd v -1\right)\ .\nonumber
\end{eqnarray} It follows that $\left( \int_{\R^3} f(t,v)|v|^2 \dd v
-1\right)$ relaxes to zero exponentially fast. In any steady state,
these moments must have the limiting value.
\end{proof}

\subsection{The fixed-point equation}

Because we have fixed the time scale so that the total loss term is
simply $f$, the steady state equation can be written as 
$$
f = (1-\gamma)Q^+(f,f) + \gamma Q^+(f,R).
$$
We now follow a method introduced in \cite{CCG} to solve this
equation.

Define the function $\Phi$ from the space of probability densities on
$\R^3$ into itself by
\begin{equation}\label{Phi} \Phi(f) = (1-\gamma)Q^+(f,f) + \gamma
Q^+(f,R)
\end{equation} so that the steady state equation is simply
\begin{equation}\label{fixed} f = \Phi(f)\ .
\end{equation}

We shall show that $\Phi$ is contractive in the
Gabetta-Toscani-Wennberg metric \cite{GTW}, which is the metric
defined as follows: Let $f$ and $g$ be two probability densities on
$\R^3$ with finite second moments such that the first and second
moments are identical. Let $\widehat{f}$ and $\widehat{g}$ denote
their Fourier transforms.  Then
$$
d_{{\rm GTW}}(f,g) := \sup_{\xi\neq 0}\frac{|\widehat f(\xi) -
  \widehat g(\xi)|}{|\xi|^2}\ .
$$

The following contraction lemma gives us existence and uniqueness of
steady states for \eqref{res}.

\begin{lm} 
  For all probability densities $f$ and $g$ with the same first and
  second moments as $R$,
$$
\dgtw(\Phi(f),\Phi(g)) \leq \left(1- \gamma \left[ \frac12 -\frac14
    \int_{-1}^1 sb(s)\dd s\right]\right) \dgtw(f,g)\ .
$$ 
In particular, if $b$ is even,
$$
\dgtw(\Phi(f),\Phi(g)) \leq \left(1 -\frac{\gamma}{2}\right) \dgtw(f,g)\ .
$$
\end{lm}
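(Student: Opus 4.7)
The plan is to transfer the estimate to Fourier space via Bobylev's identity for Maxwellian molecules, which states that
\[
\widehat{Q^+(f,g)}(\xi) \;=\; \int_{\mathbb S^2} b(\hat\xi\cdot\sigma)\,\widehat f(\xi^+)\,\widehat g(\xi^-)\,\dd\sigma,
\qquad \xi^\pm := \frac{\xi\pm|\xi|\sigma}{2},
\]
and the crucial geometric identity $|\xi^+|^2+|\xi^-|^2 = |\xi|^2$, together with $|\xi^+|^2 = \tfrac12|\xi|^2(1+\hat\xi\cdot\sigma)$. Before applying it, I would do a short preliminary check that $\Phi(f)$ and $\Phi(g)$ share the first and second moments of $R$ (so that $\dgtw(\Phi(f),\Phi(g))$ is actually defined); this is exactly the same weak-form computation as the one given in the previous lemma, applied to $Q^+(f,f)$, $Q^+(g,g)$, $Q^+(f,R)$, $Q^+(g,R)$ by way of $Q^\pm$ and the assumption that $f,g,R$ all have zero momentum and unit energy.

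Next, set $d := \dgtw(f,g)$ so that $|\widehat f(\eta)-\widehat g(\eta)|\leq d|\eta|^2$ for every $\eta$, and note $|\widehat f|,|\widehat g|,|\widehat R|\leq 1$. Splitting via bilinearity,
\[
\widehat{Q^+(f,f)}(\xi)-\widehat{Q^+(g,g)}(\xi) = \int_{\mathbb S^2} b(\hat\xi\cdot\sigma)\Big\{\bigl[\widehat f-\widehat g\bigr](\xi^+)\widehat f(\xi^-) + \widehat g(\xi^+)\bigl[\widehat f-\widehat g\bigr](\xi^-)\Big\}\dd\sigma.
\]
Taking absolute values, invoking the GTW bound and the identity $|\xi^+|^2+|\xi^-|^2=|\xi|^2$, and using $\int_{\mathbb S^2} b(\hat\xi\cdot\sigma)\dd\sigma = 1$ from \eqref{norm}, one obtains $\bigl|\widehat{Q^+(f,f)}(\xi)-\widehat{Q^+(g,g)}(\xi)\bigr|\leq d|\xi|^2$. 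So the self-collision term contributes trivially — this is the standard GTW observation that $Q^+$ is non-expansive in $\dgtw$.

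The contractive gain comes from the reservoir term. Here
\[
\widehat{Q^+(f,R)}(\xi)-\widehat{Q^+(g,R)}(\xi) = \int_{\mathbb S^2} b(\hat\xi\cdot\sigma)\bigl[\widehat f-\widehat g\bigr](\xi^+)\widehat R(\xi^-)\dd\sigma,
\]
and only the factor at $\xi^+$ benefits from the GTW bound, producing a weight $|\xi^+|^2 = \tfrac12|\xi|^2(1+\hat\xi\cdot\sigma)$ rather than $|\xi|^2$. Integrating and using \eqref{norm} gives
\[
\int_{\mathbb S^2} b(\hat\xi\cdot\sigma)\,|\xi^+|^2\,\dd\sigma = |\xi|^2\left[\frac12+\frac14\int_{-1}^1 s\,b(s)\dd s\right],
\]
so $\bigl|\widehat{Q^+(f,R)}(\xi)-\widehat{Q^+(g,R)}(\xi)\bigr|\leq d|\xi|^2\bigl[\tfrac12+\tfrac14\int_{-1}^1 sb(s)\dd s\bigr]$.

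Combining the two estimates with weights $1-\gamma$ and $\gamma$, dividing by $|\xi|^2$, and taking the supremum yields
\[
\dgtw(\Phi(f),\Phi(g)) \leq \left[1 - \gamma\left(\frac12 - \frac14\int_{-1}^1 s\,b(s)\dd s\right)\right]\dgtw(f,g),
\]
which is the desired inequality; the even-$b$ specialisation follows since then $\int sb(s)\dd s = 0$. I do not anticipate a real obstacle: the proof is essentially a careful bookkeeping of the Bobylev formula. The only point requiring a moment of care is that the naive bound on the self term produces factor $1$ (not a contraction), so the argument had better not lose anything in the reservoir term — which is why one keeps $\widehat R(\xi^-)$ as is and applies the GTW bound only to $\widehat f(\xi^+)-\widehat g(\xi^+)$ rather than symmetrising.
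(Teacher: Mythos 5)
Your proof is correct and follows essentially the same route as the paper: Bobylev's identity, the bilinear decomposition of $\widehat{Q^+}(f,f)-\widehat{Q^+}(g,g)$, the identity $|\xi^+|^2+|\xi^-|^2=|\xi|^2$ to show the self-collision term is non-expansive, and the one-sided GTW bound on the reservoir term producing the factor $\tfrac12+\tfrac14\int sb(s)\,\dd s$. The preliminary remark that $\Phi$ preserves the first and second moments (so that $\dgtw(\Phi(f),\Phi(g))$ is well-defined) is a point the paper leaves implicit, and is a sensible addition.
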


\begin{proof} Using the Bobylev formula \cite{Boby75,Bobylev},
\begin{equation}\label{bobylev} \widehat{Q^+}(f,g) = \int_{\mathbb S^2}
f(\xi_+)g(\xi_-)b\left( \sigma \cdot \frac{\xi}{|\xi|} \right)\dd \sigma
\end{equation} where
\begin{equation}\label{bobylev2} \xi_\pm = \frac{\xi \pm
|\xi|\sigma}{2}\ .
\end{equation} Note that $|\xi_+|^2+|\xi_-|^2 = |\xi|^2$.

Then we decompose 
$$
\widehat{Q^+}(f,f) - \widehat{Q^+}(g,g) = \widehat{Q^+}(f-g,f) +
\widehat{Q^+}(g,f-g)$$ 
and we deduce
\begin{eqnarray*}
  \dgtw\left(Q^+(f,f), Q^+(g,g)\right) &=& \sup_{\xi\neq 0}
  \frac{\left|\widehat{Q^+}(f-g,f) + \widehat{Q^+}(g,f-g)\right|}{|\xi|^2}\nonumber\\
  &\leq& \sup_{\xi\neq 0} \int_{\mathbb S^2}
  \frac{|\widehat{f}-\widehat{g}|(\xi_+)|\widehat{g}|(\xi_-)}{|\xi|^2}
  b\left (\sigma\cdot \frac{\xi}{|\xi|}\right)\dd \sigma\nonumber\\ 
  & +& \sup_{\xi\neq 0} \int_{\mathbb S^2}
  \frac{|\widehat{g}|(\xi_+)|\widehat{f}-\widehat{g}|(\xi_-)}{|\xi|^2}b\left(\sigma\cdot 
  \frac{\xi}{|\xi|}\right)\dd \sigma \ . 
 \end{eqnarray*} 
 Next, using the definition of $\dgtw(f,g)$ and the fact that
 $\|\widehat g\|_\infty \le 1$,
 \begin{multline*} \int_{\mathbb S^2}
\frac{|\widehat{f}-\widehat{g}|(\xi_+)|\widehat{g}|(\xi_-)}{|\xi|^2}b\left(\sigma\cdot
\frac{\xi}{|\xi|}\right)\dd \sigma \\ = \int_{\mathbb S^2}
\frac{|\widehat{f}-\widehat{g}|(\xi_+)|\widehat{g}|(\xi_-)}{|\xi_+|^2}
\frac{|\xi_+|^2}{|\xi|^2} b\left(\sigma\cdot \frac{\xi}{|\xi|}\right)\dd
\sigma \\ \leq  \dgtw(f,g) \int_{\mathbb S^2}
\frac{|\xi_+|^2}{|\xi|^2} b\left(\sigma\cdot \frac{\xi}{|\xi|}\right)\dd \sigma\
.
 \end{multline*} 
Likewise,
 $$
 \sup_{\xi\neq 0} \int_{\mathbb S^2}
 \frac{|\widehat{g}|(\xi_+)|\widehat{f}-\widehat{g}|(\xi_-)}{|\xi|^2} b\left(\sigma\cdot
 \frac{\xi}{|\xi|}\right)\dd \sigma \leq \dgtw(f,g) \int_{\mathbb S^2}
 \frac{|\xi_-|^2}{|\xi|^2} b\left(\sigma\cdot
   \frac{\xi}{|\xi|}\right)\dd \sigma\ .
$$
Then since $|\xi_+|^2+|\xi_-|^2 = |\xi|^2$, we have that
  $$
\dgtw\left(Q^+(f,f),Q^+(g,g)\right) \leq \dgtw(f,g)\ .
$$
  
  Next, by essentially the same calculation,
 \begin{eqnarray*} 
\dgtw\left(Q^+(f,R), Q^+(g,R)\right) &\leq& \int_{\mathbb S^2}
\frac{|\widehat{f}-\widehat{g}|(\xi_+)|\widehat{R}|(\xi_-)}{|\xi_+|^2}
\frac{|\xi_+|^2}{|\xi|^2} b\left(\sigma\cdot \frac{\xi}{|\xi|}\right)\dd \sigma
\nonumber\\ &\leq& \dgtw(f,g) \int_{\mathbb S^2} \frac{|\xi_+|^2}{|\xi|^2}
b\left(\sigma\cdot \frac{\xi}{|\xi|}\right)\dd \sigma \ .
 \end{eqnarray*} 
 Since $|\xi_+|^2 = \tfrac12(|\xi|^2 + |\xi|(\xi\cdot \sigma))$,
 $$
 \int_{\mathbb S^2} \frac{|\xi_+|^2}{|\xi|^2} b\left(\sigma\cdot
   \frac{\xi}{|\xi|}\right)\dd \sigma = \frac14 \int_{-1}^1 \left(1 +
   s\right)b(s)\dd s \ .
$$ 
Altogether, by the triangle inequality, we have
$$
\dgtw\left(\Phi(f,f), \Phi(g,g)\right) \leq \left((1-\gamma) + \gamma
  \left[\frac14 \int_{-1}^1 \left(1 + s\right)b(s)\dd s\right]\right)
\dgtw(f,g)\ ,
$$ 
which gives the result.

\end{proof}

\begin{thm} Suppose that $R$ satisfies \eqref{normal}. Then there is a
unique steady state solution $f_\infty$ of \eqref{res}. Moreover, if
we define a sequence by $f_0 = R$ and $f_{n} = \Phi(f_{n-1})$ for all
$n\in \N$, then $f_\infty = \lim_{n\to\infty} f_n$, and
\begin{equation}\label{iter} \dgtw(f_n,f_\infty) \leq
\frac{\lambda^n}{1-\lambda} \dgtw(\Phi(R),R)\
\end{equation} where
$$\lambda = \left((1-\gamma) + \gamma \left[\frac14 \int_{-1}^1 \left(1 + s\right)b(s)\dd s\right]\right)  < 1\ .$$
\end{thm}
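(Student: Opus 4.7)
The plan is to run a Banach fixed-point argument for $\Phi$ on the space $X$ of probability densities on $\R^3$ with zero mean and unit second moment, equipped with the metric $\dgtw$. Lemma~1.2 supplies the contraction with explicit rate $\lambda$, and $\lambda<1$ follows from $|s|\leq 1$ together with $\frac{1}{2}\int_{-1}^1 b(s)\dd s = 1$. The hypothesis \eqref{normal} places $f_0 = R$ in $X$.

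The first step is to verify $\Phi(X)\subset X$. The test-function identity derived in the proof of Lemma~1.1 applies to $Q(f,g)$ for arbitrary $g$ and gives
\[
\int_{\R^3} Q(f,g) v \dd v = \frac{1}{2}\left[1-\frac{1}{2}\int_{-1}^1 s b(s)\dd s\right]\left(\int v g \dd v - \int v f \dd v\right),
\]
with an analogous identity for the $|v|^2$ moment. Writing $\Phi(f) = f + (1-\gamma)Q(f,f) + \gamma Q(f,R)$, both corrections vanish whenever $f$ and $R$ share the normalization of $X$, so $\Phi$ preserves mean zero and second moment one. Hence every iterate $f_n$ lies in $X$, Lemma~1.2 telescopes to $\dgtw(f_{n+1},f_n)\leq \lambda^n\dgtw(\Phi(R),R)$, and the triangle inequality yields the Cauchy bound $\dgtw(f_m,f_n)\leq\frac{\lambda^n}{1-\lambda}\dgtw(\Phi(R),R)$ for $m\geq n$.

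The most delicate step is to extract a limit $f_\infty\in X$, since $(X,\dgtw)$ is not a standard complete space. The inequality $|\widehat{f}_n - \widehat{f}_m|(\xi)\leq |\xi|^2\dgtw(f_n,f_m)$ forces $\widehat{f}_n$ to converge uniformly on compact subsets of $\R^3$ to a continuous function with value $1$ at the origin, and L\'evy's continuity theorem identifies the limit as $\widehat{f}_\infty$ for a probability measure $f_\infty$. The uniform bound $|1-\widehat{f}_\infty(\xi)|=O(|\xi|^2)$ then produces a finite second moment for $f_\infty$, and comparing the second-order Taylor data of $\widehat{f}_n$ and $\widehat{f}_\infty$ at the origin (via the GTW estimate) shows the covariance matrices converge, hence the trace $\int|v|^2 f_\infty \dd v = 1$; the analogous argument for vanishing mean is simpler. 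Letting $m\to\infty$ in the Cauchy bound yields \eqref{iter}, while $\dgtw$-continuity of $\Phi$ (from Lemma~1.2 itself) gives $f_\infty = \Phi(f_\infty)$. Uniqueness is immediate: two fixed points $f,g\in X$ satisfy $\dgtw(f,g) = \dgtw(\Phi(f),\Phi(g))\leq \lambda\dgtw(f,g)$, forcing $\dgtw(f,g) = 0$.
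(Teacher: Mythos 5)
Your proof is correct and takes the same route as the paper's: apply the Banach fixed-point theorem with the contraction rate supplied by Lemma~1.2. The paper's own proof is two sentences ("direct consequence of the previous lemma and the contraction mapping theorem"), so you are not diverging from it but rather filling in the details it elides. Two of the points you supply are genuinely nontrivial and worth having spelled out: (i) the verification that $\Phi$ maps the class of probability densities with the same first and second moments as $R$ into itself (needed so that Lemma~1.2 applies to consecutive iterates), which you correctly extract from the moment computation in the proof of Lemma~1.1 together with the identity $\Phi(f) = f + (1-\gamma)Q(f,f) + \gamma Q(f,R)$; and (ii) the fact that $(X,\dgtw)$ is not a textbook complete metric space, so one needs L\'evy's continuity theorem to produce the limit probability measure and a separate argument to confirm its second moment equals one (so that $\dgtw(f_n,f_\infty)$ is even defined). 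The one small point you leave implicit, as does the paper, is that the limiting probability measure is actually a density rather than a general measure; this requires a short regularity remark on $Q^+$, but it does not affect the existence and uniqueness of the fixed point as a probability measure.
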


\begin{proof} This is a direct consequence of the previous lemma and
the contraction mapping theorem. Recall from the proof that
$$\dgtw(f_{n+1},f_{n}) \leq \lambda^n \dgtw(\Phi(R),R)$$
Then by the triangle inequality we obtain the final estimate.

\end{proof}

We remark that the equation \eqref{iter} allows the effective
computation of $f_\infty$. Once can certainly evaluate it numerically,
especially when $b$ is even so that $\lambda = 1-\gamma/2$.

\section{Exponential convergence}

Since for small $h$, and any two solutions $f$ and $g$ of \eqref{res},
\begin{equation*} 
  f(t+h) - g(t+h) = h[\Phi(f(t)) - \Phi(g(t)] +
  (1-h)[f(t)-g(t)] + o(h)\ ,
\end{equation*} it follows from our contraction estimate in the
previous section that provided $f(0)$ and $g(0)$ have the same first
and second moments as $R$,
\begin{multline*}
\frac{{\rm d}}{{\rm d}t} \dgtw(f(t),g(t)) \\ \le -
\left[1 - \left((1-\gamma) + \gamma \left[\frac14 \int_{-1}^1 \left(1
        + s\right)b(s)\dd s\right]\right)\right] \dgtw(f(t),g(t))\ .
\end{multline*}

 In particular, taking $g(t) = f_\infty$, we see that
$\dgtw(f(t),f_\infty)$ decreases to zero exponentially fast.
 
We can dispense with the requirement that the initial data $f(0)$ has
the same first and second moments as $R$ by using the correction
technique introduced in \cite{CGT}. Let us describe briefly this
argument. Let us denote 
\begin{align*}
  \lambda_0 &:= \frac12 \left[ 1 - \frac12 \int_{-1} ^{+1} s b(s) \dd s
  \right] >0, \\ 
  \lambda_1 &:= \left[1 - \left((1-\gamma) + \gamma \left[\frac14 \int_{-1}^1 \left(1
          + s\right)b(s)\dd s\right]\right)\right] >0 \ .
\end{align*}
We define (in Fourier variables)
\begin{equation*}
  \widehat{\mathcal{M}}[f] := \chi(\xi) \sum_{|\alpha| \le 2} \left( \int_{\R^3}
    v^\alpha f(v) \dd v \right) \frac{\xi^\alpha}{\alpha !}
\end{equation*}
where we use the standard notation for multi-indeces $\alpha =
(\alpha_1,\alpha_2,\alpha_3) \in \mathbb N^3$, $|\alpha| = \alpha_1 +
\alpha_2 + \alpha_3$, $v^\alpha = v_1^{\alpha_1} v_2^{\alpha_2}
v_3^{\alpha_3}$, $\alpha ! = \alpha_1 ! \alpha_2 ! \alpha_3 !$, and
where $\chi$ is a compactly supported smooth function that is equal to
one around $\xi=0$. Then if we consider two solutions $f$ and $g$ with
possibly different momentum and energy, we write $D = f- g -
\mathcal M[f-g]$, $S = f+g$,
and obtain
\begin{equation*}
  \partial_t D = (1-\gamma) Q (D,S) + (1-\gamma) Q(S,D) + \gamma Q
  (D,R) - W 
\end{equation*}
with 
\begin{multline*}
  W:= \Big[ \partial_t
    \mathcal M[f-g] + (1-\gamma) Q (\mathcal M[f-g],S) \\ + (1-\gamma)
    Q(S,\mathcal M[f-g]) +
    \gamma Q (\mathcal M[f-g],R) \Big] \ ,
\end{multline*}
and one checks by similar moment estimates as above that
\begin{equation*}
  \left| \widehat W (\xi,t)\right| 
  \le C |\xi|^2 \left( \sum_{|\alpha| \le 2} \left| \int_{\R^3}
      v^\alpha \left[f(v,t) - g(v,t) \right]\dd v \right| \right) \le C' |\xi|^2 e^{-\lambda_0
    t}
\end{equation*}
for some constants $C,C' >0$. We then perform the same contraction
estimate as before since $D$ is now the Fourier transform of a
centered zero-energy function, and obtain
\begin{equation*}
  \sup_{\xi \in \R^3} \frac{|\widehat D(\xi,t)|}{|\xi|^2} \le \sup_{\xi \in
    \R^3} \frac{|\widehat D(\xi,0)|}{|\xi|^2} e^{-\lambda_1 t} +
  C'' e^{-\min(\lambda_0,\lambda_1) t}
\end{equation*}
for some constant $C_1$. Finally we deduce by taking $g=f_\infty$ that
$f$ is converging to the equilibrium $f_\infty$ with exponential
rate, measured in the distance 
\begin{equation*}
\dgtw'(f,g) = \sup_{\xi \in \R^3} \frac{|\widehat f(\xi)-\widehat g(\xi)- \hat{\mathcal
  M}[f-g](\xi)|}{|\xi|^2} + \left|\mathcal M[f-g]\right| \ , 
\end{equation*}
which writes $\dgtw'(f,f_\infty) \le C''' e^{-
  \min(\lambda_0,\lambda_1)t}$ for some constant $C'''>0$. 

\section{Diffusive thermal reservoirs}

In some physical situations it is more appropriate to model the
interaction of a system with reservoirs by an Ornstein-Uhlenbeck
continuous time process rather than the discrete time collision model
that we have considered above.  This leads to a kinetic equation of
the form
\begin{equation}\label{OUR} \frac{\partial}{\partial t}f(v,t) = Q(f,f)
+ \sum_\alpha \eta_\alpha T_\alpha \frac{\partial}{\partial v}
\left[M_{T_\alpha} \frac{\partial}{\partial
v}\left(\frac{f}{M_{T_\alpha}}\right)\right]\
\end{equation} The constant $\eta_\alpha$ sets the strength of the
interaction with the $\alpha$th reservoir.  



Note that in this setting, the evolution equation for several
reservoirs reduces to the evolution equation for a single reservoir,
since
$$
\sum_\alpha \eta_\alpha T_\alpha \frac{\partial}{\partial v}
\left[M_{T_\alpha} \frac{\partial}{\partial
v}\left(\frac{f}{M_{T_\alpha}}\right)\right] = \eta T
\frac{\partial}{\partial v} \left[M_{T} \frac{\partial}{\partial
v}\left(\frac{f}{M_{T}}\right)\right]\ $$ where
  $$\eta = \sum_\alpha \eta_\alpha \qquad{\rm and}\qquad T = \frac{1}{\eta} \sum_\alpha \eta_\alpha T_\alpha\ .$$
  
The effective evolution equation
\begin{equation} \label{effective}
\frac{\partial f}{\partial t} = Q(f,f) + \eta T
\frac{\partial}{\partial v} \left[M_{T} \frac{\partial}{\partial
v}\left(\frac{f}{M_{T}}\right)\right]
\end{equation}
is then easy to analyze since in this case the unique stationary state
is $M_T$, and the relative entropy with respect to $M_T$, i.e.,
$\int_{\R^3} f [ \log f - \log M_T]\dd v$ decreases to zero
exponentially fast.  Likewise,
$$\left[ 3T -  \int_{\R^3}  v^2 f(v,t)\dd v\right] = e^{-2\eta t} \left[ 3T -  \int_{\R^3}  v^2 f(v,0)\dd v\right]\ .$$

In fact, more is true: This evolution also has the contractive
property proved in the previous section for \eqref{res}, now using a
different metric, but one that is equivalent to the GTW metric
\cite{GTW}, namely the $2$-Wasserstein metric.

A theorem of Tanaka \cite{T1,T2} says that the evolution described by
the spatially homogeneous Boltzmann equation for Maxwellian molecules
is contractive in this metric. (We do not describe this metric here,
other than to say that like the GTW metric, it metrizes the topology
of weak convergence of probability measures together with convergence
of second moments, and we refer to the book Villani \cite{V} for
the definition and the proof of this fact.)




As Otto has shown \cite{O}, the evolution described by
\begin{equation*}
  \frac{\partial f}{\partial t} = \eta T
  \frac{\partial}{\partial v} \left[M_{T} \frac{\partial}{\partial
      v}\left(\frac{f}{M_{T}}\right)\right]
\end{equation*}
 is exponentially contractive in the $2$-Wasserstein
metric: If $f$ and $g$ are any two solutions of this equation
$$d_{W_2}(f(\cdot,t),g(\cdot,t)) \leq e^{-\eta (t-s)} d_{W_2}(f(\cdot,s),g(\cdot,s))\ .$$
Together with Tanaka's Theorem for the equation with the collision
operator $Q$ only and a splitting argument (i.e., a non-linear Trotter
product argument), one easily establishes that this same estimate is
valid for solutions of \eqref{effective} and therefore \eqref{OUR}. In
particular, it follows that for all solutions $f(v,t)$ of \eqref{OUR},
\begin{multline} d_{W_2}(f(\cdot,t),M_T) \leq e^{-\eta t}
d_{W_2}(f(\cdot,0),M_T) \leq\\ e^{-(n/T)t} \left( \tfrac12 \int
v^2f(v,0)\dd v + T\right)\ .\end{multline} where we have used a simple
estimate (see \cite{V}) for $d_{W_2}(f(\cdot,0),M_T)$ in the last
inequality.

In conclusion, for the diffusive reservoirs, we have not only a ``free
energy type'' Lyapunov functional, namely the relative entropy with
respect to $M_T$, but also a a different one that is similar in nature
to the one we found in the previous section for \eqref{res}.

\section{Entropy production for thermal reservoirs}

We now consider the entropy production when the reservoirs are
thermal. More precisely, we assume that the time evolution of $f(v,t)$
is given by
\begin{equation}\label{thermal} 
  \frac{\partial f}{\partial t} =
  Q(f,f) + \sum_\alpha \int_{\R^3} [K_\alpha(v,v')f(v') -
  K_\alpha(v',v)f(v)]\dd v' \ .
\end{equation} 
The Markovian rates $K_\alpha(v,v')$ describe collisions with the
thermal reservoir at temperature $T_\alpha = \beta_\alpha^{-1}$
resulting in a transition from $v'$ to $v$, and $Q(f,f)$ is a general
Boltzmann type collision operator, not necessarily of the Maxwellian
type. We assume detailed balance for each reservoir; i.e.,
\begin{equation}\label{detailed} 
\forall \, \alpha, \quad K(v,v')M_{T_\alpha}(v') =
K(v',v)M_{T_\alpha}(v) \ . 
\end{equation}

Equations \eqref{thermal} and \eqref{detailed} thus include
\eqref{res} and \eqref{res2} as special cases. However, the
exponential approach proved for the latter may not hold in this more
general case. In fact, even the existence and uniqueness of a
stationary state for \eqref{thermal} and \eqref{detailed} is not
guaranteed; see e.g. \cite{CELMR1,CELMR2}. On the other hand, the
analysis below applies to the broader class of models for which there
is existence and uniqueness of the NESS, and also carries over to the
case in which $f$ and the $K_\alpha$ depend on position $x\in \R^3$,
though we shall not pursue this here.

The rate of change of the system's Boltzmann gas entropy is given by
\begin{equation}\label{enrate} 
\dot S = -\frac{{\rm d}}{{\rm d}t} \int_{\R^3}
f \log f \dd v = \sigma_B + \sum_\alpha \sigma_\alpha - \sigma_R
\end{equation} 
where $\sigma_B$ is the usual rate of change of the
entropy due to the Boltzmann collision term, which is non-negative and
equal to zero if and only if $f$ is a Maxwellian.  The $\sigma_\alpha$
are given by
\begin{equation}\label{siga} 
  \sigma_\alpha := \frac12\int_{\R^3} \int_{\R^3}
  K_\alpha(v,v')M_\alpha(v')[\nu_\alpha(v,t) - \nu_\alpha(v',t)] \log
  \frac{\nu_\alpha(v,t)}{\nu_\alpha(v',t)} \dd v\dd v' \geq 0
\end{equation} where
$$\nu_\alpha(v,t) = \frac{f(v,t)}{M_\alpha(v)}\ .$$
Finally, $\sigma_R$ is the rate of production of entropy in the
reservoirs
\begin{equation}\label{resrate} 
  \sigma_R := \sum_\alpha \beta_\alpha J_\alpha
\end{equation} with
\begin{equation}\label{resrate2} 
  J_\alpha(t) := \frac12 \int \int
  K_\alpha(v,v')f(v',t)\big[v'^2 - v^2\big]\dd v \dd v'\
\end{equation} 
being the flux of energy into the $\alpha$th reservoir at time $t$.

The total rate of entropy production in the system plus reservoirs is
given by
\begin{equation}\label{totalrate} \sigma = \dot S + \sum_\alpha
\beta_\alpha J_\alpha \geq 0\ .
\end{equation} In the stationary state, $\dot S = 0$, and
\begin{equation}\label{stat}\overline{\sigma} = \sum_\alpha
\beta_\alpha \overline{J_\alpha} = \overline{\sigma}_B +
\sum_\alpha\overline{\sigma}_\alpha\geq 0
\end{equation} where the bars denote quantities computed in the
stationary state $f_\infty$.

There is equality in \eqref{stat} if and only if $\overline{\sigma}_B
= 0$ and $\overline{\sigma}_\alpha = 0$ for each $\alpha$, and this is
the case if and only if the stationary state is a Maxwellian, and all
of the reservoirs have the same temperature.

In the case of equal temperature $\beta_\alpha = \beta$ for all
$\alpha$, $\sigma$ in \eqref{totalrate} is given by
$$
\sigma = \frac{{\rm d}}{{\rm d}t}[ S - \beta \mathcal{E}] = - \beta
\frac{{\rm d}}{{\rm d}t} \mathcal{F}$$ where $ \mathcal{E} = \tfrac12
\langle v^2\rangle$ is the average energy of the system and
$\mathcal{F}$ is a kind of free energy.  $ \mathcal{F}$ is thus a
Lyapunov functional achieving its minimum when $f = M_T$. In fact, it
is just the relative entropy of $f$ with respect to $M_T$.

When the $\beta_\alpha$ are unequal, $\sigma$ is not a time
derivative, and the only Lyapunov functionals we know are $d_{\rm
  GTW}(f,f_\infty)$ (or $d_{W_2}(f,f_\infty)$ in the case of diffusive
reservoirs), and only in the case in which the time evolution is given
by \eqref{res}.

In the stationary state we must of course have $\sum_\alpha
\overline{J}_\alpha =0$. Hence if we have only two reservoirs, then
\begin{equation}\label{hotcold} \overline{\sigma} =
(\beta_1-\beta_2)\overline{J}_1 \geq 0\ ;
\end{equation} i.e., heat flows from the hot to the cold reservoir.

We note that when the system is coupled to reservoirs (with equal or
unequal temperatures), then $S$ need not be monotone non-decreasing as
is evident from the fact that we can start from an initial state with
an entropy that is higher than that of the stationary sate; e.g., a
Maxwellian with a sufficiently high temperature. It is only when
$\sigma_R \leq 0$ that $\dot S$ must be nonnegative.

The above considerations remain valid when the Boltzmann collision
kernel $Q(f,f)$ is replaced by the modified Enskog collision kernel
which is generally considered to be a good approximation for a
moderately dense gas; see \cite{GL} and references provided there.

As noted above, in some physical situations it is more appropriate to
model the interaction of a system with reservoirs by an
Ornstein-Uhlenbeck continuous time diffusion process rather than a
discrete time jump process as in \eqref{thermal}.  Our analysis of
entropy production allows for a more general class of diffusive
reservoirs than did our discussion of the contraction property. In
particular, we can allow velocity dependent diffusion coefficients.
Similarly, the collision term may be modified when the system is not a
dilute gas. We shall therefore write a general kinetic equation in the
form
\begin{equation}\label{ias} 
\frac{\partial f}{\partial t} = Q(f) +
\sum_\alpha \frac{\partial}{\partial v} \left[\eta_\alpha(v)T_\alpha
M_{T_\alpha} \frac{\partial}{\partial
v}\left(\frac{f}{M_{T_\alpha}}\right)\right]\
\end{equation} 
requiring only that $Q(f)$ conserve the energy, momentum and mass of
$f$, and that
$$
-\int_{\R^3} Q(f) \log f \dd v \geq 0
$$
with equality if and only if $f = M_T$ for some $T$. The above
analysis can now be repeated with $\sigma_\alpha$ replaced by
$$
\sigma_\alpha ' := \int_{\R^3} f \eta_\alpha(v)T_\alpha \left|
  \frac{\partial}{\partial
    v}\left(\frac{f}{M_{T_\alpha}}\right)\right|^2 \dd v\ .
$$

\if false In this case the stationary state is given by a Maxwellian
with a reciprocal temperature $\overline{\beta}$ that is a weighted
average of the reciprocal temperatures $\beta_\alpha$ of the
reservoirs.  \fi

 \section{Systems driven by an electric field}

 In the systems considered so far in this paper, a non-equilibrium
 steady state has been maintained by at least two reservoirs with
 energy flowing out of some and into others.  A different sort of
 model is investigated in \cite{CELMR1,CELMR2} which concerns a weakly
 ionized plasma with energy supplied by an electric field $E$, and
 removed into a reservoir by a damping mechanism. In this model, the
 probability density $f(v,t)$ evolves according to
\begin{equation}\label{plasma} 
\frac{\partial f}{\partial t} =
\frac{1}{\epsilon}Q(f) -E\cdot \frac{\partial }{\partial v} f + \nu \left[
\tilde f - f\right] + \frac{\partial }{\partial v}\left[ D(v) M
\frac{\partial }{\partial v} \left(\frac{f}{M}\right)\right]\ .
\end{equation}
In \eqref{plasma}, $E$ is a constant electric field, $\epsilon$ is a
small parameter setting the rate of internal collisions, $\tilde{f}$
is the spherical average of $f$ (hence radial, but with the same
energy distribution as $f$), $\nu$ is a constant, and $M$ is a
centered Maxwellian with $T=1$.  See (2.1-4) in \cite{CELMR1}.

When $E= 0$, the unique equilibrium is $M$, and the free energy
$\mathcal{F}$ is a Lyapunov function.  Also when $D(v) = D$
independent of $v$, and $\nu = 0$, the NESS is given by the shifted
Maxwellian $M(v-u)$ where $u = ET/D$. In this case
$$\mathcal{F} = \frac12 \langle |v-u|^2\rangle -\beta S$$
serves as a Lyapunov function governing convergence to the NESS.

However, when $D(v)$ has certain properties, $E\neq 0$ and $\nu >0$,
it is shown in \cite{CELMR1,CELMR2} that for all $\epsilon$
sufficiently small, there is a range of the parameters for which there
are multiple stable stationary solutions of \eqref{plasma}. This means
in particular that there does not exist any global Lyapunov function
for \eqref{plasma} for general parameters.

Finally we have in this stationary state that
$$
\overline{\sigma} = \beta \overline{j}\cdot E \quad \mbox{ where}
\quad \overline{j} = \int_{\R^3} v f_\infty(v)\dd v 
$$ 
is the steady state current.
 
\section{Microscopic models}

We have investigated here the approach to the NESS and some properties
of that state for models in contact with several thermal reservoirs at
different temperatures. We showed that there exist in some cases
Lyapunov functionals even when we do not know the NESS in explicit
form. It would be nice to show that these models correspond to
$N$-particle microscopic models in some suitable scaling limits.

A plausible conjecture is the following: Suppose that the dynamics of
the isolated system is given by a Hamiltonian microscopic dynamics,
and it yields, in some suitable limit, a Boltzmann equation for the
single particle distribution; e.g., hard-sphere collisions under the
Boltzmann-Grad scaling limit \cite{L,PSS}.  Adding stochastic
interactions with thermal reservoirs should then lead, in a suitable
limit, to \eqref{thermal}.

However, this is beyond our current reach, even for the case in which
the system is in contact with a single reservoir.  If one drops the
requirement that the microscopic dynamics be Hamiltonian, the
situation is much better. Such a microscopic derivation was proven
recently by Bonetto, Loss and Vaidyanathan \cite{BLV} when the
isolated system dynamics is given by the Kac stochastic collision
model and there is a single thermal reservoir. One may expect a
similar result to be valid for the Kac system in contact with several
reservoirs.

\medskip

\noindent{\bf Acknowledgements} Work of E.A.C. is partially supported
by N.S.F. grant DMS 1201354.  J.L.L. wishes to thank the I.A.S. its
hospitality during the course of the work, and his work is partially
supported by N.S.F. grant DMR 1104500 and AFOSR grant FA9550.
C.M. wishes to thank the IAS for its support during his visit in may
2014 when this work was done. His research is also partially funded by
the ERC Starting Grant MATKIT.

\end{document}